\DeclareMathOperator*{\argmax}{argmax}
\newcommand{\x}{\boldsymbol{x}}
\definecolor{darkgreen}{rgb}{0.0,0.60,0.30} 
\definecolor{newred}{rgb}{1.00,0.70,0.70} 
\newtheorem{prop}{Proposition}
\newtheorem{corol}{Corollary}
\begin{document}

\title{\LARGE{\normalfont{Provably Trainable Rotationally Equivariant Quantum Machine Learning\\}}}

\author{Maxwell T. West} \email{westm2@student.unimelb.edu.au}  \affiliation{School of Physics, The University of Melbourne, Parkville, 3010, VIC, Australia}
\author{Jamie Heredge}   \affiliation{School of Physics, The University of Melbourne, Parkville, 3010, VIC, Australia}
\author{Martin Sevior} \affiliation{School of Physics, The University of Melbourne, Parkville, 3010, VIC, Australia}
\author{Muhammad Usman} \email{musman@unimelb.edu.au}  \affiliation{School of Physics, The University of Melbourne, Parkville, 3010, VIC, Australia}
\affiliation{Data61, CSIRO, Clayton, 3168, VIC, Australia}

\maketitle%

\onecolumngrid%

\noindent%
\textcolor{black}{\normalsize{\textbf{Exploiting the power of quantum computation to realise superior machine learning algorithms 
has been a major research focus of recent years, but the prospects of quantum machine learning (QML) remain dampened by considerable technical challenges.
A particularly significant issue is that generic QML models suffer from so-called barren plateaus in their training landscapes -- 
large regions where cost function gradients vanish exponentially in the number of qubits employed, 
rendering large models effectively untrainable.
A leading strategy for combating this effect is to build problem-specific models which take into account the symmetries of their data in order to focus on a
smaller, relevant subset of Hilbert space. In this work, we introduce a family of rotationally equivariant QML models built upon the quantum Fourier transform,
and leverage recent insights from the Lie-algebraic study of QML models to prove that (a subset of) our models do not exhibit barren plateaus.
In addition to our analytical results
we numerically test our rotationally equivariant models on a dataset of simulated scanning tunnelling 
microscope images of phosphorus impurities in silicon, where rotational symmetry naturally arises,
and find that they 
dramatically outperform their generic counterparts in practice.
}}}
\\ \\ \\
\twocolumngrid%
\noindent%
\section{Introduction}

Stimulated by steady progress in quantum computing hardware~\cite{bravyi2022future}, the possibility of using parameterised quantum circuits to carry out machine 
learning tasks has been thoroughly investigated in recent years as a near-term
application of noisy intermediate scale quantum computers~\cite{biamonte2017quantum,beer2020training, havlivcek2019supervised, schuld2019quantum,qcnn,west2023towards,tsang2023hybrid,schuld2021supervised,liu2021rigorous,huang2022quantum}.
Among the most important discoveries in quantum machine learning (QML) has been the existence of barren plateaus in the training landscapes of variational quantum 
models~\cite{mcclean2018barren,wang2021noise,holmes2022connecting,cerezo2021cost},
reminiscent of the vanishing gradients which plagued early neural networks~\cite{bengio1994learning}.
Barren plateaus have come to be 
understood to be linked to the expressibility of the ansatz being optimised~\cite{holmes2022connecting,larocca2022diagnosing,ragone2023unified,fontana2023adjoint},
with generic, highly expressible models rendered effectively untrainable as the number of qubits increases. 
This suggests that the architectures of variational QML models should be tweaked on a per-problem basis, with domain specific knowledge employed to construct 
ansatze with inductive biases compatible with the optimal solution.
A natural approach along these lines is to build models that explicitly respect the symmetries of their data, so-called \textit{geometric quantum machine learning} 
(GQML)~\cite{meyer2022exploiting,ragone2022representation,nguyen2022theory,schatzki2022theoretical,sauvage2022building,skolik2022equivariant,larocca2022group,zheng2022benchmarking,west2023reflection,chang2023approximately,east2023all}. 
For example, GQML has been used to classify images with spatial symmetries~\cite{west2023reflection,chang2023approximately}, and more general data which 
satisfies permutation symmetries~\cite{schatzki2022theoretical, sauvage2022building, heredge2023permutation}.
By constraining the search space of the optimisation procedure, such symmetry-informed models have
in a few instances been proved to be free of barren plateaus~\cite{schatzki2022theoretical, pesah2021absence}.
Despite a general theory of the trainability of QML models starting to emerge~\cite{ragone2023unified,fontana2023adjoint}, however,
the number of explicit examples of provably trainable models that have been developed and benchmarked on real data to 
test their performance in practice remains limited.
\\

\noindent
In this work, we introduce a family of rotationally equivariant QML models for classifying two dimensional data with labels that are invariant to rotations by $2\pi/2^k$ for $k\in\mathbb{N}$.
In particular, these models are applicable to image data, the consideration of the symmetries of which in the GQML context 
has previously been limited to
reflections~\cite{west2023reflection} and 90\textdegree\ rotations~\cite{chang2023approximately}. 
Our proposed architectures allow for the fraction of resources allocated to the processing of the radial and angular degrees of freedom to be controlled,
enabling for an interpolation between a model with no explicit notion of rotational symmetry $(k=0)$ to models which respect continuous rotations
$(k\to\infty)$. 
This is facilitated by choosing an encoding strategy which ensures that the representation of the group action on the encoded quantum states is that 
of the regular representation of $\mathbb{Z}_{2^k}$, which can be diagonalised via a (quantum) Fourier transform, following which it
becomes simple to write down an equivariant model, without needing to resort to explicit symmetrisation techniques such as twirling~\cite{ragone2022representation}.
We study the trainability of our models, finding that
the scaling of the variance of the cost function with respect to the trainable parameters of the model differs 
qualitatively depending on the fraction of the qubits used to encode the radial and angular degree of freedom.
Specifically, we use recently developed techniques~\cite{ragone2023unified,fontana2023adjoint} which involve studying the dynamical Lie algebras of 
quantum circuits to prove that 
our models are free from barren plateaus when the number of qubits devoted to encoding the radial information grows at most logarithmically with the total 
number of qubits. \\

\begin{figure*}
\makebox[\textwidth][c]{\includegraphics[width=1.01\textwidth]{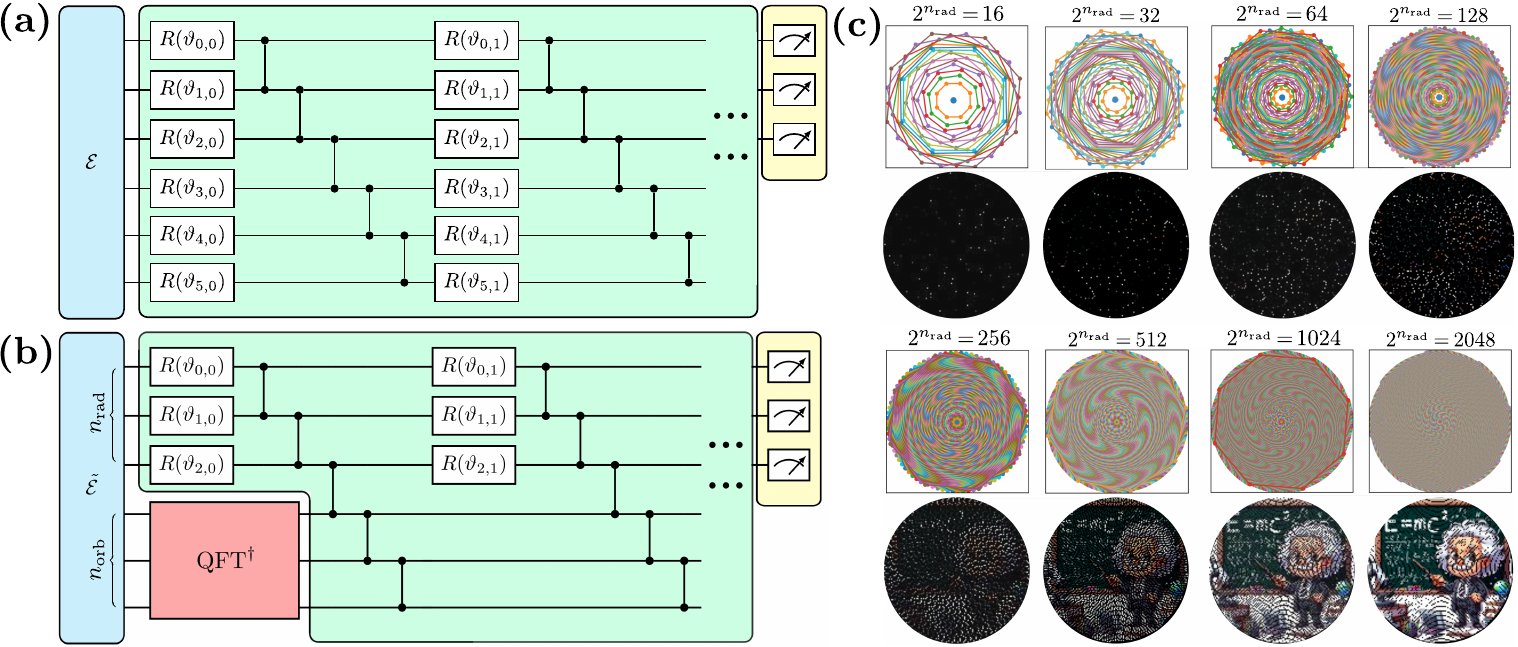}}
  \caption{\textbf{Rotationally Equivariant Quantum Machine Learning.} (a) A generic fully expressible QML model. Models such as 
  these are known to suffer from \textit{barren plateaus},
  regions in their training landscape where the gradients of the cost function vanish exponentially, rendering the models untrainable.
  (b)  The architecture of our rotation equivariant models, which are less than fully expressible,
  having been restricted to the subspace of the full Hilbert space which respects rotation symmetry.
  The combination of our encoding choice $\tilde{\mathcal{E}}$ (see Equation~\ref{eq:encoded_state}) and the quantum Fourier transform 
  ensures that the induced 
  representation of the rotation symmetry group 
  will commute with any operation on the first $n_{\mathrm{rad}}$ qubits, and operations on the last $n_{\mathrm{orb}}$
  qubits which are diagonal in the computational basis. Thus by employing an arbitrary circuit on the 
  first $n_{\mathrm{rad}}$ qubits and CZ gates only on the remainder, we construct an equivariant model.
  (c) Our choice of encoding leads to pixels being sampled from images at the vertices of $2^{n_{\mathrm{rad}}}$
  regular $2^{n_{\mathrm{orb}}}$-gons (top row), with examples of the resulting reconstructed images  for a  
  128$\cp 128$ pixel image with $2^{n_{\mathrm{orb}}}=8$ (i.e. octagons) and $16\leq 2^{n_{\mathrm{rad}}}\leq 2048$ (bottom row). 
  }
\label{fig:1}
\end{figure*}

\noindent
In addition to our analytical results, we test the performance of our models in practice by benchmarking them against generic, fully expressible models 
on a 10-class classification task involving the analysis of scanning tunnelling microscope (STM) images of phosphorus impurities implanted in silicon.
This dataset was chosen as the classification of such STM images
is a relatively complicated problem which possesses rotational symmetry and also has important applications to the 
fabrication of spin-based quantum computers in silicon~\cite{usman2016spatial}.
We find that our rotationally equivariant models achieve drastically better results and scale much more readily to deeper model sizes than 
the generic models, which suffer from significant trainability issues. 
This is achieved despite the generic models
being in principle capable of finding any solution available to the 
equivariant models, and adds to the growing body of evidence supporting GQML as a promising pathway for quantum machine learning.

\section{Rotationally-Equivariant Quantum Machine Learning}
We begin by briefly introducing the basic ideas of GQML needed 
to motivate our rotationally equivariant architectures.
Much more detail can be found in the literature, e.g. Refs.~\cite{ragone2022representation, nguyen2022theory}.
In the standard supervised learning setting we are given data $\x$ drawn from a set $\mathcal{X}$, and attempt to determine associated labels  
$y(\x)\in\mathcal{Y}=\{1,2,\ldots,n_{\mathrm{classes}}\}$.
The quantum models we employ for this task here are comprised of three main components --  an initial data encoding layer, a trainable variational section, and a final set of measurements 
(see Figure~\ref{fig:1}(a,b)). In the final stage $ n_{\mathrm{classes}}$ operators $\{M_j\}_{j=1}^{n_{\mathrm{classes}}}$ are measured, 
with the prediction $\hat{y}$ of the model on the input $\x$ defined to be the class corresponding to the operator with the highest expectation value, i.e. 
\begin{equation}
  \hat{y}_{\theta}(\x) = \argmax_j \bra{0}\mathcal{E}(\x)^\dagger U_{\theta}^\dagger M_j U_{\theta}\mathcal{E}(\x)\ket{0} 
  \label{eq:pred}
\end{equation}
where $\mathcal{E}(\x)$ denotes the data  encoding unitary for the input $\x$, and $U_\theta$ the variational component evaluated with 
parameters $\theta\in\Theta$. 
Now, if $\mathcal{G}:\mathcal{X}\to\mathcal{X}$ is a symmetry of the data at the level of the classification labels, i.e. 
$y(g(\x))=y(\x)\ \forall\x\in\mathcal{X},\ g\in\mathcal{G}$ (but \textit{not} necessarily $g(\x)=\x$),
then we wish to build models with inductive biases which automatically enforce respect for this symmetry, i.e. 
$\hat{y}_{\theta}(\x)=\hat{y}_\theta(g(\x))\ \forall\x ,\theta$.
The encoding map $\mathcal{E}:\mathcal{X}\to\mathcal{H}$ induces a unitary representation $R$ of the symmetry group $\mathcal{G}$ 
on $\mathcal{H}$ making the following diagram commute: 
\[
\begin{tikzcd}[
    row sep=huge,
    column sep=huge,
    cells={nodes={font=\normalsize, inner sep=1ex, outer sep=.5ex}}
]
  \mathcal{X} \arrow[r, "g"] \arrow[d, "\mathcal{E}"', swap]
  & \mathcal{X} \arrow[d, "\mathcal{E}"] \\
  \mathcal{H} \arrow[r, "R(g)"]
  & \mathcal{H}
\end{tikzcd}
\]
and the prediction of the model on a data point acted upon by $g\in\mathcal{G}$ becomes
\begin{equation}
 \hat{y}_{\theta}(g(\x)) = \argmax_j \bra{\psi(\x)}R_g^\dagger U_{\theta}^\dagger M_j U_{\theta}R_g\ket{\psi(\x)}  
  \label{eq:pred_sym}
\end{equation}
with $\ket{\psi(\x)}=\mathcal{E}(\x)\ket{0}$ the encoded state of the data.
For the predictions of our model to be invariant with respect to the action of the symmetry, then, equating Equations \ref{eq:pred} and \ref{eq:pred_sym}
we find a condition for equivariance,
\begin{equation}
  \left[ R(g), U_{\theta}^\dagger M_j U_{\theta} \right] = 0\qquad \forall g\in\mathcal{G},\ \theta\in\Theta,\ j\in\{1,\ldots, n_{\mathrm{classes}}\}  
  \label{eq:equi_comm}
\end{equation}
i.e. $U_{\theta}^\dagger M_j U_{\theta} \in \mathfrak{comm}(R)$, and the representation of the symmetry group enforces constraints on the 
architecture of the model. 
The subspace $\mathfrak{comm}(R)$ will depend on the nature of 
the encoding map $\mathcal{E}$, to which we now turn for our specific application.\\

Our encoding strategy  $\tilde{\mathcal{E}}$ is a slight modification of the standard amplitude encoding method.
For an image this would typically involve flattening the image into a one dimensional vector $\x$ and then 
mapping the elements of that vector to the amplitudes of a set of basis states, i.e.
$\x\mapsto 1/\norm{\x}^2\sum_i x_i\ket{i}$.
This, however, would lead to complicated and non-local elements $R(g)$ of the representation of the  symmetry group.
In principle this is not a fundamental issue, and one could construct an equivariant model via (for example)
``twirling'' (i.e. explicitly projecting onto the invariant subspaces of the representation)~\cite{ragone2022representation}, 
but here we instead opt to implement an encoding method more amenable to rotational symmetry, 
which will allow us to easily write down an equivariant model in terms of familiar quantum gates.
Expending some effort modifying the initial data encoding stage in order to simplify the equivariance constraints has previously 
been successfully employed in the context of reflection symmetry~\cite{west2023reflection}.
Our strategy is parameterised by two hyper-parameters, $n_{\mathrm{rad}}$ and $n_\mathrm{orb}$
(respectively the number of qubits assigned to encoding radial and orbital degrees of freedom, 
with $n=n_{\mathrm{rad}}+n_\mathrm{orb}$ the total number of qubits), and 
involves constructing $2^{n_\mathrm{rad}}$ regular $2^{n_{\mathrm{orb}}}$-gons
(see Figure \ref{fig:1}(c) for examples with  $n_\mathrm{orb}=3$ and $4\leq n_\mathrm{rad}\leq 11$).
The image is sampled from at each vertex of each polygon, and, 
indexing the  polygons by $r\in\{0,1,\ldots,2^{n_{\mathrm{rad}}}-1\}$, and the angular coordinate describing the position on the polygon by
$\phi=2\pi k/2^{n_\mathrm{orb}},\ k\in\{0,1,\ldots,2^{n_\mathrm{orb}}-1\}$, the corresponding pixel values $x_{r,\phi}$ are used to construct the state
\begin{equation}
  \ket{\psi(\x)} = \frac{1}{\norm{\x}^2} \sum_{r,\phi} x_{r,\phi}\ket{r,\phi}   
  \label{eq:encoded_state}
\end{equation}
With this choice of data encoding the representation $R(g)$ of the operation of rotating by $2\pi g/2^{n_\mathrm{orb}}$ simply acts as
\begin{equation}
 R(g)\ket{\psi(\x)} = \sum_{r,\phi} x_{r,\phi}\ket{r,\phi + 2\pi g/2^{n_\mathrm{orb}}\ (\mathrm{mod}\ 2\pi)} 
\end{equation}
\noindent
Explicitly, with respect to this basis we have (in the case $n_\mathrm{orb}=2$, for example)
\[
  R(0) = \mathbb{I}_{\mathrm{rad}}\otimes \begin{pmatrix}
    1 & 0 & 0 &0  \\
    0 & 1 & 0 &0  \\
    0 & 0 & 1 &0  \\
    0 & 0 & 0 &1  \\
\end{pmatrix},\qquad
R(1) = \mathbb{I}_{\mathrm{rad}}\otimes\begin{pmatrix}
    0 & 0 & 0 &1  \\
    1 & 0 & 0 &0  \\
    0 & 1 & 0 &0  \\
    0 & 0 & 1 &0  \\
\end{pmatrix}
\]
\[ 
R(2) = \mathbb{I}_{\mathrm{rad}}\otimes\begin{pmatrix}
    0 & 0 & 1 &0  \\
    0 & 0 & 0 &1  \\
    1 & 0 & 0 &0  \\
    0 & 1 & 0 &0  \\
\end{pmatrix},\qquad
R(3) =\mathbb{I}_{\mathrm{rad}}\otimes \begin{pmatrix}
    0 & 1 & 0 &0  \\
    0 & 0 & 1 &0  \\
    0 & 0 & 0 &1  \\
    1 & 0 & 0 &0  \\
\end{pmatrix}
\]

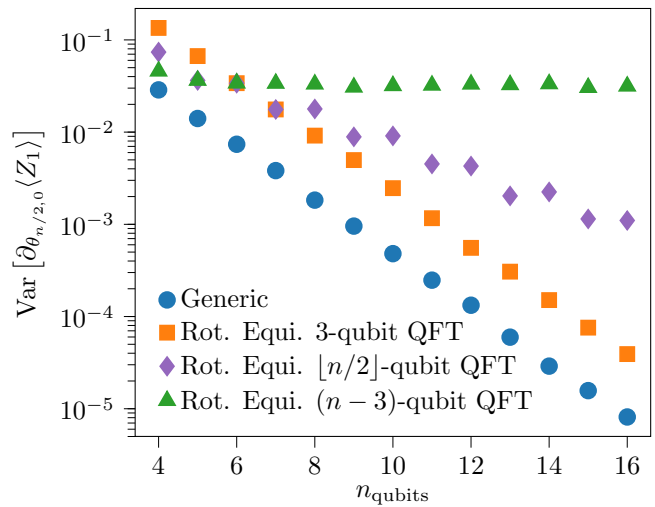
\begin{figure}
\begin{tikzpicture}

\definecolor{darkgray176}{RGB}{176,176,176}
\definecolor{darkorange25512714}{RGB}{255,127,14}
\definecolor{forestgreen4416044}{RGB}{44,160,44}
\definecolor{mediumpurple148103189}{RGB}{148,103,189}
\definecolor{steelblue31119180}{RGB}{31,119,180}

\begin{axis}[
legend cell align={left},
legend style={
  fill opacity=1.0,
  draw opacity=1,
  text opacity=1,
  at={(0.03,0.03)},
  anchor=south west,
  draw=none,
  fill=none
},
log basis y={10},
tick align=outside,
tick pos=left,
x grid style={darkgray176},
xlabel={\(\displaystyle n_{\mathrm{qubits}}\)},
xmin=3.4, xmax=16.6,
xtick style={color=black},
y grid style={darkgray176},
ylabel={\(\displaystyle  \mathrm{Var}\left[\partial_{\theta_{n/2,0}}\langle Z_1\rangle\right]\)},
ylabel style={yshift=5pt},
ymin=5.00995873381322e-06, ymax=0.219156988196939,
ymode=log,
ytick style={color=black}
]
\addplot [semithick, steelblue31119180, mark=*, mark size=3, mark options={solid}, only marks]
table {%
4 0.0287082695452347
5 0.0140485473747844
6 0.00738444241462472
7 0.00382441086566244
8 0.00182996942707149
9 0.000956524876971033
10 0.000480711860369096
11 0.000248013066990622
12 0.000132632642147355
13 5.97532924559757e-05
14 2.89849698213173e-05
15 1.57220064782744e-05
16 8.14301268187451e-06
};
\addlegendentry{Generic}
\addplot [semithick, darkorange25512714, mark=square*, mark size=2.7, mark options={solid}, only marks]
table {%
4 0.134835534462253
5 0.0667644917711864
6 0.0339421156237326
7 0.0176117362182932
8 0.00916319046901668
9 0.00497329604895582
10 0.00246784115436452
11 0.00116516496243061
12 0.000555693507932158
13 0.000306753993900294
14 0.000150785695892405
15 7.58045207673048e-05
16 3.9156385694753e-05
};
\addlegendentry{Rot. Equi. $3$-qubit QFT}
\addplot [semithick, mediumpurple148103189, mark=diamond*, mark size=3.5, mark options={solid}, only marks]
table {%
4 0.073732301400117
5 0.0363476803147932
6 0.0339421156237326
7 0.0176117362182932
8 0.0178290180546564
9 0.00888316210623914
10 0.00909623627881679
11 0.00451619072409216
12 0.00428470583496594
13 0.00203287413973089
14 0.00224442501050539
15 0.00114389809487631
16 0.00110104637249665
};
\addlegendentry{Rot. Equi. $\lfloor{n/2}\rfloor$-qubit QFT}
\addplot [semithick, forestgreen4416044, mark=triangle*, mark size=3.5, mark options={solid}, only marks]
table {%
4 0.0457770923533566
5 0.0363476803147937
6 0.0339421156237331
7 0.0337519157280206
8 0.0331137678798169
9 0.0306734700940407
10 0.0318269256032519
11 0.0321885719046995
12 0.0330653753113584
13 0.0326104456536457
14 0.0333470173336821
15 0.0304038366748015
16 0.0314651369788205
};
\addlegendentry{Rot. Equi. $(n-3)$-qubit QFT}
\end{axis}

\end{tikzpicture}
  \caption{\textbf{Gradient Scaling}. The variance of the derivative of a measured observable with respect to a 
  parameter in the middle of the model under different assignments of the qubits to encoding radial and rotational information. 
  In all cases the variances are estimated by averaging 
  over 1000 randomly initialised models.  When the number $n_{\mathrm{rad}}$ of qubits not involved in the quantum Fourier transform 
  grows faster than 
  logarithmically with $n_{\mathrm{qubits}}$ the model suffers from a barren plateau, as exhibited by the exponential 
  vanishing of the cost function gradients.
  }
\label{fig:bp}
\end{figure}

\begin{figure*}
\includegraphics[width=\textwidth]{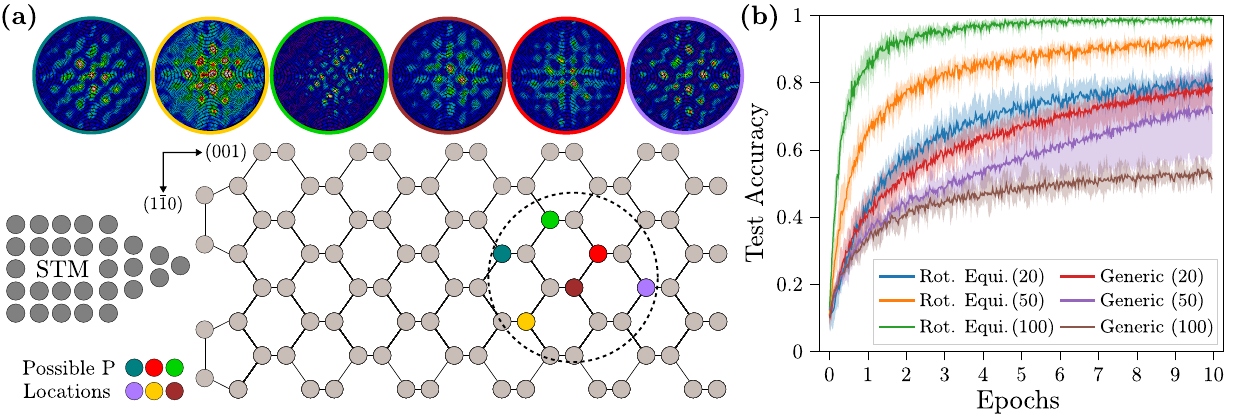}
  \caption{\textbf{Pinpointing Atomic Impurities with Symmetry-Informed QML.} 
  In order to numerically test the performance of our models in practice we consider a
  classification task which involves determining the 
  location of phosphorus donor atoms implanted into silicon by analysing their STM images. 
  (a) P donors in Si create remarkably rich STM images which depend strongly on their precise location in the lattice.
  Here we consider the problem of determining the correct lattice site from ten possibilities 
  corresponding to positions between 3$a_0$ and 4.5$a_0$ beneath the surface (with $a_0\approx 0.54$nm the silicon lattice constant)
  given a (simulated, see Appendix B) STM image.
  (b) 
  We randomly initialise and train each model configuration five times, plotting the mean accuracy achieved on a validation set of 500 examples
  throughout training, and shading the area between the best and worst performing of the training runs.
  The architecture and number of layers in each model is indicated.
  Our rotationally equivariant models drastically outperform their generic counterparts on this task, despite possessing fewer trainable 
  parameters and having access to only a fraction of the total Hilbert space. 
  Moreover, we find that increasing the depth of the generic models actually adversely affects their performance in practice, 
  due to the increasing issues with trainability and given only a finite set of training examples.
  The training set consists of 7500 examples, equally distributed amongst the classes.
  }
\label{fig:2}
\end{figure*}

\noindent
where $\mathbb{I}_r$ is the identity matrix acting on the radial degrees of freedom.
Thus the rotations act on the angular degrees of freedom as the regular representation of the additive group $\mathbb{Z}_{2^{n_\mathrm{orb}}}$,
and can be diagonalised by the Fourier transformation over that group~\cite{childs2010quantum}. With $\mathbb{Z}_{2^{n_\mathrm{orb}}}$ 
being an abelian group, this is simply the familiar discrete Fourier transform, and can be implemented
efficiently on a quantum computer by means of the quantum Fourier transform (QFT).
Operators $ O\in\mathfrak{comm}(R)$ are then also block diagonalised by the QFT 
(in fact fully diagonalised in this case as the irreducible representations of abelian groups are one dimensional), and so 
following the initial operations 
$\left(\mathbb{I}_{\mathrm{rad}}\otimes \mathrm{QFT}^\dagger _{\mathrm{orb}}\right)\circ \tilde{\mathcal{E}}(\x)$
we construct our equivariant model from components which are 
diagonal on the angular degrees of freedom and unconstrained
on the radial degrees of freedom. 
In this work for simplicity we take the diagonal component to be chains of nearest-neighbour CZ gates (see Figure~\ref{fig:1}(b))
and employ a fully expressible ansatz on the radial component,
although we could also include (for example) parameterised $Z$ rotations on the angular register, or introduce additional constraints on 
the radial register.\\

We now turn to the trainability of our rotationally equivariant models. 
Our main analytical result is that, under some conditions, the models are free from barren plateaus.
These results utilise the recent insights of Refs.~\cite{ragone2023unified,fontana2023adjoint},
which analyse the mean and variance of the cost function of QML models by studying the Lie closure of the generators of the circuit.
In brief, given a layered quantum circuit of the form
\begin{equation}
  \mathcal{U}(\boldsymbol{\theta}) =\prod_{i=1}^L U(\boldsymbol{\theta}_i)=\prod_{i=1}^L\prod_{j=1}^J e^{-i\theta_{ij}H_{j}}
  \label{eq:layers}
\end{equation}
one can relate the cost function mean and variance to the dimension of the \textit{dynamical Lie algebra} 
$ \mathfrak{g} = \mathrm{span}\ \langle iH_0,iH_1,\ldots, iH_J\rangle_{\mathrm{Lie}}$, 
where $ \langle \mathcal{G}\rangle_{\mathrm{Lie}}$ is the set formed from repeated nested commutators of the elements of $\mathcal{G}$.
In particular, if $\mathrm{dim\ }  \mathfrak{g}$ grows exponentially with the number of qubits then the model necessarily exhibits barren plateaus~\cite{ragone2023unified,fontana2023adjoint}.
Intuitively, the relevance of the nested commutators of the generators can be seen by applying the Baker–Campbell–Hausdorff formula to Equation~\ref{eq:layers}.
In this work we show that the dimension of the dynamical Lie algebra of our models is exponential in $n_{\mathrm{rad}}$, but only linear in 
$n_{\mathrm{orb}}$. Specifically:

\begin{prop} \label{prop:dim}
  The dynamical Lie algebra $\mathfrak{g}$ of our rotationally equivariant model has dimension
\[ \mathrm{dim}\ \mathfrak{g} = 2\cdot 4^{n_{\mathrm{rad}}} + n_{\mathrm{orb}} -1 \]
  where $n_{\mathrm{rad}}$ ($n_{\mathrm{orb}}$) is the number of qubits used to encode the radial (angular) degrees of freedom.
\end{prop}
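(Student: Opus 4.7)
The plan is to identify an explicit generating set for the dynamical Lie algebra $\mathfrak{g}$ of the ansatz, compute its Lie closure by iterating commutators, and decompose $\mathfrak{g}$ into linearly independent sectors whose dimensions add cleanly. Guided by the tensor structure of the equivariant model---arbitrary generators on the radial register, mutually commuting $Z_jZ_{j+1}$ generators from the nearest-neighbour CZ chain on the angular register, plus the coupling implicit in the ansatz---I would partition $\mathfrak{g}$ into a ``pure radial'' sector of operators of the form $A\otimes\mathbb{I}_{\mathrm{orb}}$, a ``coupled'' sector of the form $A\otimes Z_b$ for a distinguished angular qubit $b$, and a ``pure angular'' sector spanned by the $Z_jZ_{j+1}$ chain.

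The pure-radial sector is immediate: the fully expressible ansatz generates $\mathfrak{u}(2^{n_{\mathrm{rad}}})\otimes\mathbb{I}_{\mathrm{orb}}$, contributing $4^{n_{\mathrm{rad}}}$ dimensions. The pure-angular sector is also immediate: the generators $\{Z_jZ_{j+1}\}_{j=1}^{n_{\mathrm{orb}}-1}$ mutually commute as products of $Z$ operators, so their Lie closure is just their linear span, of dimension $n_{\mathrm{orb}}-1$; cross-commutators with the pure-radial sector vanish because the two sectors are supported on disjoint tensor factors.

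The substantive step is handling the coupling. Writing a CZ-type generator linking a radial qubit $a$ to the distinguished angular qubit $b$ as $(\mathbb{I}-Z_a)(\mathbb{I}-Z_b)/4$, the piece orthogonal to $\mathfrak{u}(2^{n_{\mathrm{rad}}})\otimes\mathbb{I}$ takes the form $X=(-\mathbb{I}_{\mathrm{rad}}+Z_a)\otimes Z_b$. Iterating commutators against $\mathfrak{u}(2^{n_{\mathrm{rad}}})\otimes\mathbb{I}$ produces $[A,Z_a]\otimes Z_b$ for every $A\in\mathfrak{u}(2^{n_{\mathrm{rad}}})$; using the simplicity of $\mathfrak{su}(2^{n_{\mathrm{rad}}})$, the Lie ideal generated by any nonzero traceless image fills out $\mathfrak{su}(2^{n_{\mathrm{rad}}})\otimes Z_b$. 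Combining this with the surviving $\mathbb{I}\otimes Z_b$ component already present in $X$ recovers the full $\mathfrak{u}(2^{n_{\mathrm{rad}}})\otimes Z_b$, contributing another $4^{n_{\mathrm{rad}}}$ dimensions. Closure is then straightforward: $[A\otimes Z_b,B\otimes Z_b]=[A,B]\otimes\mathbb{I}$ via $Z_b^2=\mathbb{I}$ lands back in the pure-radial sector, and every element of the pure-angular $Z$-string chain commutes with $Z_b$.

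The main obstacle is showing that the coupling generates \emph{exactly} one extra copy of $\mathfrak{u}(2^{n_{\mathrm{rad}}})$ on the $\otimes Z_b$ sector: the traceless part $\mathfrak{su}(2^{n_{\mathrm{rad}}})\otimes Z_b$ emerges naturally from the ad-action, but the identity component $\mathbb{I}\otimes Z_b$ is never reached by commutators of Pauli strings (whose brackets are traceless) and must instead be extracted from the ``scalar'' part of the original CZ generator. One must also verify that iterated commutators never escape into higher-weight angular sectors such as $\otimes Z_bZ_c$; this is guaranteed by the fact that every angular-side operator encountered is a product of $Z$'s, together with $Z_b^2=\mathbb{I}$. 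Summing over the three sectors then gives $\dim\mathfrak{g}=4^{n_{\mathrm{rad}}}+4^{n_{\mathrm{rad}}}+(n_{\mathrm{orb}}-1)=2\cdot 4^{n_{\mathrm{rad}}}+n_{\mathrm{orb}}-1$.
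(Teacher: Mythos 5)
Your proposal is correct and follows essentially the same route as the paper's proof: the same decomposition of $\mathfrak{g}$ into the doubled radial sector $\mathfrak{u}(2^{n_{\mathrm{rad}}})\otimes\{\mathbb{I},Z_b\}$ generated with the help of the boundary CZ, plus the $n_{\mathrm{orb}}-1$ central generators of the angular CZ chain. The only notable differences are that you take full expressibility on the radial register as given (the paper derives it from single-qubit rotations together with the CZ chain via the argument of Ref.~\cite{larocca2022diagnosing}), you fill out the $\otimes Z_b$ copy via irreducibility of the adjoint action of $\mathfrak{su}(2^{n_{\mathrm{rad}}})$ rather than the paper's direct construction, and you are somewhat more explicit than the paper about the upper bound, i.e.\ that commutators never leave the counted span.
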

\addtocounter{prop}{-1}

\noindent
The exponential dependence on $n_{\mathrm{rad}}$ is a result of us employing a fully expressible ansatz on the radial register;
one could replace it with a variational structure possessing a DLA growing only polynomially in $n_{\mathrm{rad}}$, 
but this constraint is not forced by rotational equivariance.
Combined with the results of Ref.~\cite{ragone2023unified} 
our main result follows:

\begin{corol} \label{corol:bp}
  Our rotationally equivariant models are free from barren plateaus if and only if 
  $4^{-n_{\mathrm{rad}}}\mathcal{P}_{\mathfrak{g}_s}(\rho)\notin\mathcal{O}(1/b^n)$ for any $b>2$, where $\rho=\ket{\psi(\x)}\bra{\psi(\x)}$ is the initial state and 
  $\mathcal{P}_{\mathfrak{g}_s}(\rho)$ its purity  with respect to the semisimple component of the dynamical Lie algebra $\mathfrak{g}$. 
\end{corol}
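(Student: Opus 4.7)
The plan is to combine Proposition~\ref{prop:dim} with the main variance theorem of Ref.~\cite{ragone2023unified}, which relates the gradient variance of a QML model to the dimension and purity content of its dynamical Lie algebra. The first step is to decompose the DLA into its semisimple and abelian parts, $\mathfrak{g}=\mathfrak{g}_s\oplus\mathfrak{g}_a$. Inspecting the generator set used in the proof of Proposition~\ref{prop:dim}, the nearest-neighbour $CZ$ generators acting diagonally on the angular register mutually commute and contribute an abelian ideal of dimension $n_{\mathrm{orb}}-1$; the remaining (non-diagonal) generators on the radial register come from a fully expressible ansatz and produce a semisimple ideal of dimension $2\cdot 4^{n_{\mathrm{rad}}}$, in agreement with the total $\dim\mathfrak{g}=2\cdot 4^{n_{\mathrm{rad}}}+n_{\mathrm{orb}}-1$.

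The second step is to apply the variance formula of Ref.~\cite{ragone2023unified}, which — assuming the variational block is deep enough to approximate a $2$-design on $e^{\mathfrak{g}}$ — bounds the variance of the gradient two-sidedly by constant multiples of $\mathcal{P}_{\mathfrak{g}_s}(\rho)\,\mathcal{P}_{\mathfrak{g}_s}(O)/\dim\mathfrak{g}_s$, with the abelian ideal $\mathfrak{g}_a$ shifting only the mean of the cost function at leading order. Inserting $\dim\mathfrak{g}_s=2\cdot 4^{n_{\mathrm{rad}}}$ and observing that for the local Pauli measurements employed in our classification the observable purity $\mathcal{P}_{\mathfrak{g}_s}(O)$ is independent of $n_{\mathrm{orb}}$ (so that its contribution can be absorbed into the implicit constants of the $\Theta$-bound), the gradient variance satisfies
\[
\mathrm{Var}_{\boldsymbol{\theta}}\!\left[\partial_\mu C(\boldsymbol{\theta})\right] \;=\; \Theta\!\left(4^{-n_{\mathrm{rad}}}\,\mathcal{P}_{\mathfrak{g}_s}(\rho)\right).
\]
The two-sided $\Theta$ then delivers the ``if and only if'' of the corollary upon invoking the definition of barren plateaus from Ref.~\cite{ragone2023unified}: a model is free from a barren plateau precisely when its gradient variance fails to lie in $\mathcal{O}(1/b^n)$ for any $b>2$.

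The principal obstacle I anticipate is the clean identification of $\mathfrak{g}_s$ — verifying that the $2\cdot 4^{n_{\mathrm{rad}}}$-dimensional contribution is indeed semisimple (e.g.\ that it realises a direct sum of two $\mathfrak{u}$-type factors coming from the action of the radial generators on the QFT-invariant subspaces of the angular register) and confirming that the local Pauli observables used in practice have purity with respect to $\mathfrak{g}_s$ that is bounded below by a positive constant independent of $n_{\mathrm{orb}}$, so that no hidden cancellation inflates the effective variance decay. A secondary technical point is to check that the hypotheses of Ref.~\cite{ragone2023unified} (notably the $2$-design assumption on $e^{\mathfrak{g}}$) are actually met by our ansatz, or, failing that, to obtain matching upper and lower bounds via the adjoint-representation calculus of Ref.~\cite{fontana2023adjoint}. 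Modulo these technicalities, the corollary follows from Proposition~\ref{prop:dim} essentially by bookkeeping.
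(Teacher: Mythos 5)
There is a genuine gap, and it sits precisely where your plan does the least work. Your step of "absorbing" the observable purity into implicit constants is not valid: for the loss $\ell_{\boldsymbol{\theta}}(\rho,Z_y)$ the paper computes, with respect to each simple ideal, $\mathcal{P}_{\mathfrak{g}_\pm}(Z_y)=2^{n-1}$, i.e.\ the purity of the measured observable grows exponentially in the \emph{total} number of qubits (it is certainly not independent of $n_{\mathrm{orb}}$). The exact formula of Theorem 1 of Ref.~\cite{ragone2023unified}, $\mathrm{Var}_{\boldsymbol{\theta}}[\ell_{\boldsymbol{\theta}}(\rho,O)]=\sum_{j}\mathcal{P}_{\mathfrak{g}_j}(\rho)\mathcal{P}_{\mathfrak{g}_j}(O)/\dim\mathfrak{g}_j$ (a sum over the \emph{simple} ideals, applicable here since $O=Z_y\in i\mathfrak{g}$), then gives a variance scaling as $2^{n}\,4^{-n_{\mathrm{rad}}}\,\mathcal{P}_{\mathfrak{g}_s}(\rho)$ up to constant factors. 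That extra $2^{n}$ is exactly what produces the threshold $b>2$ in the corollary: exponential decay in $n$ occurs iff $4^{-n_{\mathrm{rad}}}\mathcal{P}_{\mathfrak{g}_s}(\rho)\in\mathcal{O}(1/b^{n})$ for some $b>2$. Your claimed $\Theta\bigl(4^{-n_{\mathrm{rad}}}\mathcal{P}_{\mathfrak{g}_s}(\rho)\bigr)$ would instead yield an "if and only if" with threshold $b>1$, which is a different statement from the one to be proved; the bookkeeping therefore does not close without the explicit purity computation.

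The second problem is that the decomposition you posit is wrong in its details, and the part you defer as an "anticipated obstacle" is the actual content of the proof. The centre of $\mathfrak{g}$ has dimension $n_{\mathrm{orb}}+1$, not $n_{\mathrm{orb}}-1$: besides the $n_{\mathrm{orb}}-1$ commuting CZ generators one must pull out $I^{\otimes(n_{\mathrm{rad}}+1)}$ and $I^{\otimes n_{\mathrm{rad}}}\otimes Z$, after which the semisimple component is $\mathfrak{su}(2^{n_{\mathrm{rad}}})\oplus\mathfrak{su}(2^{n_{\mathrm{rad}}})$ of dimension $2(4^{n_{\mathrm{rad}}}-1)$ — "two $\mathfrak{u}$-type factors" of total dimension $2\cdot4^{n_{\mathrm{rad}}}$ cannot be right, because $\mathfrak{u}$ is not semisimple. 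Moreover, exhibiting the two simple ideals is not automatic: the naive split into spans of $A\otimes I\otimes I^{\otimes(n_{\mathrm{orb}}-1)}$ and $B\otimes Z\otimes I^{\otimes(n_{\mathrm{orb}}-1)}$ fails because the second span is not a subalgebra; the paper's proof recombines them using the projectors $(I\pm Z)/\sqrt{2}$ on the $(n_{\mathrm{rad}}+1)$th qubit to obtain two ideals, each isomorphic to $\mathfrak{su}(2^{n_{\mathrm{rad}}})$. Since the variance theorem is stated ideal-by-ideal, with each ideal's own dimension and purities, this identification together with the purity computation above must actually be carried out; once it is, the mean vanishes (as $Z_y$ has no projection onto the centre) and the corollary follows, with no need for the separate two-sided $\Theta$-bound or design-based estimates you invoke.
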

\addtocounter{corol}{-1}

\noindent
Here $\rho_\mathfrak{g}$ denotes the orthogonal projection of $\rho$ onto $\mathfrak{g}$ (with respect to the Hilbert-Schmidt inner product)
with purity $\mathcal{P}_{\mathfrak{g}}(\rho) := \tr [(\rho_{\mathfrak{g}})^2]$.
Proofs of Proposition~\ref{prop:dim} and Corollary~\ref{corol:bp} are given in Appendix A.
Corollary~\ref{corol:bp} gives the constraint within which the rotationally equivariant models must operate
in order to not suffer from barren plateaus, and has several immediate consequences.
Firstly, regardless of the choice of $n_{\mathrm{rad}}$, the purity $\mathcal{P}_{\mathfrak{g}_s}(\rho)$ cannot vanish faster than $2^{-n}$.
Secondly, as $\mathcal{P}_{\mathfrak{g}}(\rho)\leq 1$, the number of qubits used in the radial 
register is bounded by 
$n_{\mathrm{rad}}\leq n/2 + o(n)$. 
These restrictions correspond respectively to requirements on (a generalised notion of)
the inital state entanglment~\cite{cerezo2021cost} and the circuit expressibility~\cite{holmes2022connecting}, 
and are linked through the unifying Lie-algebraic theory of Ref.~\cite{ragone2023unified}.
In Figure~\ref{fig:bp} we plot empirical results for the cost function gradient scaling under various choices of $n_{\mathrm{rad}}/n_{\mathrm{qubits}}$, 
the conclusions of which are consistent with our analytical findings.
Under the constraints of Corollary~\ref{corol:bp}, then,
our rotationally equivariant models join the (short) list of QML architectures provably free from barren plateaus~\cite{schatzki2022theoretical, pesah2021absence}.
This is an extremely encouraging finding, but nonetheless the true test of machine learning models is to benchmark their performance 
in practice on a real dataset, which is the task to which we now turn.

\section{Numerical Results}
We test our models on a 10-class dataset consisting of STM images of phosphorus (P) donor impurities implanted in silicon (Si),
with the classification task being to identity, with lattice-site precision, the location of the donor which produced the STM image (see Figure~\ref{fig:2}).
This is enabled in principle by the remarkable sensitivity of the images to the precise location of the donor in the Si lattice~\cite{usman2016spatial,west2021influence},
and has important applications to the construction of high-fidelity spin-based quantum computers in silicon~\cite{usman2016spatial}.
This problem has been previously been tackled with machine learning approaches~\cite{usman2020framework,west2022framework}, under the assumption of a fixed orientation of the Si crystal with 
respect to the direction of motion of the STM. Relaxing that assumption leads naturally to STM images which are randomly rotated, forming
the classification task we consider here.\\

We simulate the STM images resulting from P donors implanted at ten different positions between $3a_0$ and $4.5a_0$ below the surface (with $a_0\approx$ 0.54nm the Si lattice constant)
using a multimillion atom tight-binding simulation implemented within \texttt{NEMO-3D}~\cite{nemo} (see Appendix B for details). 
This results in $128\cp 128$ pixel grayscale images, to which we add Gaussian noise and rotate, before 
encoding them into quantum states by means of Equation~\ref{eq:encoded_state} with $n_{\mathrm{rad}}=10$
and $n_{\mathrm{orb}}=3$. This results in us sampling from the images at the vertices of 1024 octagons 
(see Figure~\ref{fig:1}(c) for the general strategy and Figure~\ref{fig:2}(a) for reconstructions of the resulting STM images (without noise)).
We also consider a generic model which employs standard amplitude encoding ($\x\mapsto 1/\norm{\x}^2\sum_i x_i\ket{i}$) and a maximally expressible ansatz
(see Figure~\ref{fig:1}(b)).
The models are implemented within Pennylane~\cite{bergholm2018pennylane} and trained with the \texttt{ADAM} optimiser~\cite{adam} with a learning rate of 1e-3 
on a dataset of 7500 images over 10 epochs, with their accuracy on a validation set of 500 examples recorded throughout the training process.
The loss funtion $\ell_{\theta}$ for an input state $\rho=\ket{\psi(\x)}\bra{\psi(\x)}$ with true label $y\in\{1,2,\ldots,n_{\mathrm{classes}}\}$ 
is given by 
\begin{equation}
  \ell_{\boldsymbol{\theta}}(\rho, Z_y) = - \tr\left[ U_{\boldsymbol{\theta}}\rho U_{\boldsymbol{\theta}}^\dagger Z_y \right]
  \label{eq:loss}
\end{equation}
where $Z_y$ denotes the Pauli $Z$ gate acting on the $y$th qubit.
The results, for models with 20, 50, and 100 layers are plotted in Figure~\ref{fig:2}(b). 
We train each model five times from different random initialisations, and plot the average over the training runs.
The shaded areas are bounded by the best and worst of the runs.
We find that the rotationally equivariant models drastically outperform their generic counterparts, which suffer from trainability issues that only worsen 
within increasing model depth.
Combined with our analytical results, we conclude that these models offer a significant advantage over a generic ansatz for two 
dimensional data with rotational symmetry.

\section{Summary and Outlook}
Much as early classical neural networks suffered from significant trainability issues~\cite{bengio1994learning} 
that were solved only after 
considerable experimentation and a sequence of increasingly optimised architectural decisions, early QML models 
based on generic circuits look likely to be replaced by carefully designed constructions.
In the absence of large-scale fault tolerant quantum computers on which to easily experiment with
different types of QML architectures in the many qubit regime, symmetry principles represent a promising guide to current research~\cite{meyer2022exploiting,ragone2022representation,nguyen2022theory,schatzki2022theoretical,sauvage2022building,skolik2022equivariant,larocca2022group,zheng2022benchmarking,west2023reflection,chang2023approximately,east2023all}
that has now lead to the discovery of multiple architectures which are provably trainable~\cite{schatzki2022theoretical, pesah2021absence}, 
including in this work.
Our results have been facilitated by the increased understanding of the nature of QML models afforded by their 
rapidly developing Lie-algebraic theory~\cite{ragone2023unified,fontana2023adjoint},
through which we can study the complexity of our models as they interpolate between
the trainable and untrainable regimes, as characterised by the dimensionality of the dynamical Lie algebra.
Finally we note that imposing restrictions on the expressibility of QML models introduces the possibility of efficient 
classical simulations, work on which in the Lie algebra context has already begun~\cite{goh2023lie}.
Families of models such as ours which can interpolate between easily trainable but potentially classically simulable circuits 
through to classically intractable but barren plateau ridden circuits then constitute an interesting research direction 
in the ongoing search for practical quantum advantage in machine learning.\\ 

\noindent
\textbf{Acknowledgements:} M.W. acknowledges the support of Australian Government Research Training Program Scholarships. 
This work was supported by Australian Research Council Discovery Project DP210102831. 
Computational resources were provided by the National Computing Infrastructure (NCI) and the Pawsey Supercomputing Research Center 
through the National Computational Merit Allocation Scheme (NCMAS).
This research was supported by The University of Melbourne’s Research Computing Services and the Petascale Campus Initiative.\\

\noindent
\textbf{Code availability:}
The code which supports the findings of this article is available at \url{https://github.com/maxwest97/rotation-equivariant-qnn}

\vspace{20mm}
\section*{Appendix A}
In this Appendix we prove Proposition~\ref{prop:dim} and Corollary~\ref{corol:bp} from the main text.
\begin{prop} 
  The dynamical Lie algebra $\mathfrak{g}$ of our rotationally equivariant model has dimension
\[ \mathrm{dim}\ \mathfrak{g} = 2\cdot 4^{n_{\mathrm{rad}}} + n_{\mathrm{orb}} -1 \]
  where $n_{\mathrm{rad}}$ ($n_{\mathrm{orb}}$) is the number of qubits used to encode the radial (angular) degrees of freedom.
\end{prop}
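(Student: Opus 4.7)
I compute $\mathfrak{g}$ by enumerating the generators of the parameterised gates of the equivariant ansatz and taking their Lie closure directly, then checking the result decomposes as an explicit direct sum with the claimed dimension.

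First, the generators split into three families: (i) the universal variational block on the $n_{\mathrm{rad}}$ radial qubits, whose closure is $\mathfrak{u}(2^{n_{\mathrm{rad}}}) \otimes I_{\mathrm{orb}}$; (ii) the CZ coupling between a chosen radial qubit and the first orbital qubit, whose generator $(I - Z_{\mathrm{rad}})(I - Z_{\mathrm{orb},1})/4$ contributes, after absorbing pieces already in (i), a seed of the form $Z_{\mathrm{rad}} \otimes Z_{\mathrm{orb},1}$ together with an $I \otimes Z_{\mathrm{orb},1}$ summand; and (iii) the nearest-neighbour CZ chain on the orbital register, contributing the $n_{\mathrm{orb}} - 1$ mutually commuting diagonal generators $I_{\mathrm{rad}} \otimes Z_i Z_{i+1}$ for $i = 1, \ldots, n_{\mathrm{orb}} - 1$.

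Second, I close under commutators. Brackets within family (i) give back $\mathfrak{u}(2^{n_{\mathrm{rad}}}) \otimes I$. Brackets of (i) against the seed of (ii) produce $[\mathfrak{u}(2^{n_{\mathrm{rad}}}), Z_{\mathrm{rad}}] \otimes Z_{\mathrm{orb},1}$, and iterating exhausts $\mathfrak{su}(2^{n_{\mathrm{rad}}}) \otimes Z_{\mathrm{orb},1}$ because $\mathfrak{su}(2^{n_{\mathrm{rad}}})$ is simple, so the adjoint orbit of any nonzero element spans the whole algebra; combining with the $I \otimes Z_{\mathrm{orb},1}$ piece fills the coupled block out to $\mathfrak{u}(2^{n_{\mathrm{rad}}}) \otimes Z_{\mathrm{orb},1}$. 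Brackets within this coupled block return to (i) via $Z_{\mathrm{orb},1}^2 = I$, so the orbital tensor factor stays confined to $\mathrm{span}\{I, Z_{\mathrm{orb},1}\}$ under all further iterated brackets. Family (iii) commutes with every other generator (being diagonal on the orbital register and identity on the radial), so it contributes only its own abelian span of dimension $n_{\mathrm{orb}} - 1$. The closure is therefore the sum of the three subspaces $\mathfrak{u}(2^{n_{\mathrm{rad}}}) \otimes I$, $\mathfrak{u}(2^{n_{\mathrm{rad}}}) \otimes Z_{\mathrm{orb},1}$, and $I_{\mathrm{rad}} \otimes \mathrm{span}\{Z_i Z_{i+1}\}$, which are linearly independent in $M_{2^n}$ because they sit in distinct Pauli tensor-factor classes on the orbital register. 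Summing dimensions yields $4^{n_{\mathrm{rad}}} + 4^{n_{\mathrm{rad}}} + (n_{\mathrm{orb}} - 1) = 2 \cdot 4^{n_{\mathrm{rad}}} + n_{\mathrm{orb}} - 1$, as claimed.

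The main obstacle is the confinement argument: showing that arbitrary nested brackets involving the bridge generator never push the orbital tensor factor outside $\mathrm{span}\{I, Z_{\mathrm{orb},1}\}$. This requires careful tracking of the orbital-side Pauli content at each bracket and relies on $Z_{\mathrm{orb},1}^2 = I$ together with the diagonality (and hence mutual commutativity) of the family (iii) generators; without this, spurious orbital Pauli strings could in principle enter the closure. The complementary richness half of the argument, that the brackets of (i) with the seed of (ii) really do sweep out all of $\mathfrak{su}(2^{n_{\mathrm{rad}}}) \otimes Z_{\mathrm{orb},1}$, is comparatively routine once simplicity of $\mathfrak{su}(2^{n_{\mathrm{rad}}})$ is invoked.
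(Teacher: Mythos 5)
Your proposal is correct and arrives at the paper's count via essentially the same decomposition of the Lie closure, namely $\mathfrak{u}(2^{n_{\mathrm{rad}}})\otimes I$, plus $\mathfrak{u}(2^{n_{\mathrm{rad}}})\otimes Z_{\mathrm{orb},1}$, plus an $(n_{\mathrm{orb}}-1)$-dimensional central piece from the orbital CZ chain, but your justification of the key ``doubling'' step is genuinely different. The paper works at the level of explicit Pauli commutators: it computes $[X_i,\ \mathbb{I}-Z_i-Z_{i+1}+Z_iZ_{i+1}]$ and $[Y_i,\ \cdot\ ]$ to obtain $X_iZ_{i+1}$ and $Y_iZ_{i+1}$, derives radial universality by generating all nearest-neighbour two-body terms and appealing to Appendix I of Ref.~\cite{larocca2022diagnosing}, and then attaches $Z_{n_{\mathrm{rad}}+1}$ to each radial Pauli string with the aid of the bridge generator. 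You instead take the radial block's universality as given (consistent with the model definition) and argue structurally: the seed $Z_{\mathrm{rad}}\otimes Z_{\mathrm{orb},1}$ extracted from the bridge generator, bracketed repeatedly against $\mathfrak{su}(2^{n_{\mathrm{rad}}})\otimes I$, generates an ideal of the simple algebra $\mathfrak{su}(2^{n_{\mathrm{rad}}})$ and hence all of $\mathfrak{su}(2^{n_{\mathrm{rad}}})\otimes Z_{\mathrm{orb},1}$, which is a cleaner route to the same conclusion. You are also more explicit than the paper about the upper bound: the confinement argument via $Z_{\mathrm{orb},1}^2=I$, showing the listed span is already closed under brackets, is precisely what guarantees the dimension is not larger than claimed, a point the paper leaves implicit. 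One small correction: the orbital-chain generators are proportional to $\mathbb{I}-Z_i-Z_{i+1}+Z_iZ_{i+1}$, and for orbital qubits beyond the first the single-$Z$ pieces are \emph{not} absorbed by the other two summands, so the third subspace is not literally $I_{\mathrm{rad}}\otimes\mathrm{span}\{Z_iZ_{i+1}\}$; this does not affect the count, since each such generator is a single vector commuting with everything in the closure and therefore contributes exactly one new dimension, but the precise form of these central elements matters if one goes on to identify the centre of $\mathfrak{g}$, as the paper does in proving its Corollary.
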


\begin{proof}[Proof of Proposition \ref{prop:dim}]
The dynamical Lie algebra $\mathfrak{g}$ (DLA)~\cite{larocca2022diagnosing,ragone2023unified,fontana2023adjoint}  of a layered quantum circuit of the form 
\begin{align}
  \mathcal{U}(\boldsymbol{\theta}) &= \prod_{i=1}^L U(\boldsymbol{\theta}_i)\\
  &=\prod_{i=1}^L\prod_{j=1}^J e^{-i\theta_{ij}H_{j}}
\end{align}
is given by the Lie closure of the generators $\mathcal{G}$ of its layers,
\begin{equation}
 \mathfrak{g} =\mathrm{span}\ \langle i\mathcal{G} \rangle_{\mathrm{Lie}} = \mathrm{span}\ \langle iH_0,iH_1,\ldots, iH_J\rangle_{\mathrm{Lie}} 
\end{equation}
where $ \langle \mathcal{G}\rangle_{\mathrm{Lie}}$ is the set formed from repeated nested commutators of the elements of $\mathcal{G}$.
In our rotationally-equivariant case the generators $H_j$ of the circuit are the operators that when exponentiated produce arbitrary rotations on the first 
$n_{\mathrm{rad}}$ qubits, and $CZ_{i,i+1}$ gates between nearest-neighbour qubits. 
We thus have that $X_i,Y_i,Z_i\in\mathcal{G}\ \forall i\leq n_{\mathrm{rad}}$, as well as $cz_{i,i+1}$ where 
$\exp -i(cz_{i,i+1})=CZ_{i,i+1}\ \forall i \leq n_{\mathrm{qubits}} - 1$. Explicitly, in the computational basis in the two qubit case we have
\[ CZ_{1,2}= \begin{pmatrix}
1 & 0 & 0 & 0 \\
0 & 1 & 0 & 0 \\
0 & 0 & 1 & 0 \\
0 & 0 & 0 & -1 \\
\end{pmatrix} = \exp \begin{pmatrix}
0 & 0 & 0 & 0 \\
0 & 0 & 0 & 0 \\
0 & 0 & 0 & 0 \\
0 & 0 & 0 & i\pi \\
\end{pmatrix} =\exp -i(cz_{1,2}) 
\]
from which we can readily obtain expressions for the $CZ$ generators in terms of Pauli operators,
\[cz_{i,i+1} =- \frac{\pi}{4}\left( \mathbb{I} - Z_i - Z_{i+1} + Z_iZ_{i+1} \right) \]
We therefore wish to determine the dimension of the vector space spanned by the Lie closure of the set
\[ \left\{ X_i,\ Y_i,\ Z_i \right\}_{i=1}^{i=n_{\mathrm{rad}}}\ \bigcup \ \left\{ \mathbb{I} - Z_i - Z_{i+1} + Z_iZ_{i+1} \right\}_{i=1}^{i=n_{\mathrm{qubits}}-1} \]
First we restrict our attention to $i\leq n_{\mathrm{rad}}$. We have
\[ [X_i, \mathbb{I} - Z_i - Z_{i+1} + Z_iZ_{i+1} ]= 2i\left(Y_i - Y_iZ_{i+1}\right)  \]
\[ [Y_i, \mathbb{I} - Z_i - Z_{i+1} + Z_iZ_{i+1} ]= -2i\left(X_i - X_iZ_{i+1}\right)  \]
So $\{X_iZ_{i+1},\ Y_iZ_{i+1}\}_{i\leq n_{\mathrm{rad}}}\subset \mathfrak{g}$. By commuting these terms with appropriate single-qubit operators 
we can generate all two-qubit nearest-neighbour operators acting within the first $n_{\mathrm{rad}}$ qubits. 
It then follows by the argument presented in Appendix I of Ref.~\cite{larocca2022diagnosing} 
that all $4^{n_{\mathrm{rad}}}$ of the $n$-body Pauli strings with support on the first $n_{\mathrm{rad}}$ qubits lie within the DLA
  (giving complete expressibility on those qubits as $\mathrm{dim}\ \mathfrak{u}(n_{\mathrm{rad}})=4^{n_{\mathrm{rad}}}$). 
Furthermore, for each of these operators we can construct (with the aid of $cz_{n_{\mathrm{rad}},n_{\mathrm{rad}}+1}$)
a corresponding operator which also applies $Z_{n_{\mathrm{rad}}+1}$, doubling the number of linearly independent operators
we have found within $\mathfrak{g}$.
Finally, the remaining $ n_{\mathrm{orb}} -1$ generators of $CZ$ gates commute both amongst themselves and with all 
the other linearly independent operators in the DLA, bringing the total dimension to
  $2(4^{n_{\mathrm{rad}}}) + n_{\mathrm{orb}} -1$. 
\end{proof}

\noindent
Our next result makes use of the notion of the $\mathfrak{g}$-purity $\mathcal{P}_{\mathfrak{g}}(O) = \tr O_\mathfrak{g}^2$ of an operator $O$ with respect to an operator 
subalgebra $\mathfrak{g}$~\cite{ragone2023unified,somma2004nature}, 
with $O_\mathfrak{g}$ the orthogonal (with respect to the Hilbert-Schmidt norm) projection of $O$ onto $\mathfrak{g}$.

\vspace{10mm}
\begin{corol}
  Our rotationally equivariant models are free from barren plateaus if and only if 
  $4^{-n_{\mathrm{rad}}}\mathcal{P}_{\mathfrak{g}_s}(\rho)\notin\mathcal{O}(1/b^n)$ for any $b>2$, where $\rho=\ket{\psi(\x)}\bra{\psi(\x)}$ is the initial state and 
  $\mathcal{P}_{\mathfrak{g}_s}(\rho)$ its purity  with respect to the semisimple component of the dynamical Lie algebra $\mathfrak{g}$. 
\end{corol}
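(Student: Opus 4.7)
The plan is to combine the explicit dimension formula of Proposition~\ref{prop:dim} with the variance theorem of Ref.~\cite{ragone2023unified}, which expresses the cost function gradient variance for $U_\theta$ Haar-random in $e^{\mathfrak{g}}$ in terms of the $\mathfrak{g}$-purities of the initial state and observable divided by the dimension of the semisimple component $\mathfrak{g}_s$ of the DLA. The defining condition for a barren plateau --- variance scaling as $\mathcal{O}(1/b^n)$ for some $b > 2$ --- then converts directly into the stated scaling condition on $\mathcal{P}_{\mathfrak{g}_s}(\rho)$.

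First I would split $\mathfrak{g}$ into its semisimple and abelian parts. The $n_{\mathrm{orb}}-1$ orbital-register $cz$-generators isolated in the last step of the proof of Proposition~\ref{prop:dim} are diagonal, mutually commuting, and commute with every other generator, so they sit inside an abelian ideal. The $2\cdot 4^{n_{\mathrm{rad}}}$ operators arising from the radial expansion --- all Pauli strings on the first $n_{\mathrm{rad}}$ qubits together with their $Z_{n_{\mathrm{rad}}+1}$-dressed partners --- decompose under the orthogonal projectors $(I \pm Z_{n_{\mathrm{rad}}+1})/2$ into two copies of $\mathfrak{u}(2^{n_{\mathrm{rad}}})$; the two $\mathfrak{su}(2^{n_{\mathrm{rad}}})$ factors form $\mathfrak{g}_s$, of dimension $\Theta(4^{n_{\mathrm{rad}}})$, while the two $\mathfrak{u}(1)$ centres join the abelian ideal, leaving $\dim \mathfrak{g}_a = n_{\mathrm{orb}} + 1$.

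Next I would apply the variance formula of Ref.~\cite{ragone2023unified}, which in the present setting reads, up to universal constants,
\begin{equation*}
\mathrm{Var}_{U_\theta}\!\bigl[\partial_\theta C_\theta(\rho, O)\bigr] \propto \frac{\mathcal{P}_{\mathfrak{g}_s}(\rho)\,\mathcal{P}_{\mathfrak{g}_s}(O)}{\dim \mathfrak{g}_s} + \Delta_{\mathfrak{g}_a},
\end{equation*}
where $\Delta_{\mathfrak{g}_a}$ is an abelian correction governed by $\dim \mathfrak{g}_a = \mathcal{O}(n)$ and therefore cannot on its own induce exponential decay. For the observable $O = Z_y$ a short calculation shows that $\mathcal{P}_{\mathfrak{g}_s}(O) = \Theta(1)$ in the normalisation of Ref.~\cite{ragone2023unified} (it either sits inside $\mathfrak{g}_s$ directly, when $y$ labels a radial qubit, or projects onto it nontrivially via the $Z_{n_{\mathrm{rad}}+1}$-dressed sector). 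Substituting $\dim \mathfrak{g}_s = \Theta(4^{n_{\mathrm{rad}}})$ then gives a leading variance proportional to $4^{-n_{\mathrm{rad}}}\mathcal{P}_{\mathfrak{g}_s}(\rho)$, and negating the barren plateau condition yields precisely the stated criterion.

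The main obstacle will be verifying rigorously that the abelian correction $\Delta_{\mathfrak{g}_a}$ is genuinely subleading: although $\dim \mathfrak{g}_a$ grows only linearly, one must ensure that the associated $\mathfrak{g}_a$-purities of $\rho$ and $Z_y$ do not themselves scale in a way that lets a spurious exponential term take over the variance. This reduces to diagonal-basis bookkeeping because the abelian generators are by construction diagonal in the computational basis used to encode $\rho$ via Equation~\ref{eq:encoded_state}, making their contribution to the purity decomposition explicit and controllable, so that the barren plateau criterion is dictated entirely by the semisimple piece $\mathcal{P}_{\mathfrak{g}_s}(\rho)/4^{n_{\mathrm{rad}}}$.
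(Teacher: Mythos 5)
Your overall route is the same as the paper's: decompose $\mathfrak{g}$ via the projectors $(I\pm Z_{n_{\mathrm{rad}}+1})/2$ into two simple ideals isomorphic to $\mathfrak{su}(2^{n_{\mathrm{rad}}})$ plus an $(n_{\mathrm{orb}}+1)$-dimensional abelian centre, and then feed this into the Lie-algebraic variance theorem of Ragone et al. However, there is a genuine quantitative gap in the middle step: your claim that $\mathcal{P}_{\mathfrak{g}_s}(Z_y)=\Theta(1)$ is wrong under the Hilbert--Schmidt conventions in which the corollary is stated. A direct computation (which the paper carries out with an explicit orthonormal basis $\bigl\{2^{-n/2}\,P\otimes (I\pm Z)/\sqrt{2}\otimes I^{\otimes(n_{\mathrm{orb}}-1)}\bigr\}$) gives $\mathcal{P}_{\mathfrak{g}_\pm}(Z_y)=2^{n-1}$, i.e.\ exponentially large in $n$, since $\tr[Z_y^2]=2^n$ and $Z_y$ lies entirely in $\mathfrak{g}_+\oplus\mathfrak{g}_-$. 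This factor is not cosmetic: with it, the variance becomes
\begin{equation*}
\mathrm{Var}_{\boldsymbol{\theta}}[\ell_{\boldsymbol{\theta}}(\rho,Z_y)]
=\frac{2^{n-1}}{\dim\mathfrak{su}(2^{n_{\mathrm{rad}}})}\,\mathcal{P}_{\mathfrak{g}_s}(\rho)
\sim 2^{n}\,4^{-n_{\mathrm{rad}}}\,\mathcal{P}_{\mathfrak{g}_s}(\rho),
\end{equation*}
and it is exactly the extra $2^{n}$ that converts the na\"ive threshold ``$b>1$'' into the stated ``$b>2$''. Your version, with variance $\propto 4^{-n_{\mathrm{rad}}}\mathcal{P}_{\mathfrak{g}_s}(\rho)$, would prove a different (weaker-looking) criterion than the one in the corollary, so as written the final ``negating the barren plateau condition yields precisely the stated criterion'' does not go through.

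Two smaller points. First, the variance formula you quote contains an abelian correction $\Delta_{\mathfrak{g}_a}$ that does not appear in the theorem being invoked: in Ragone et al.\ the abelian centre contributes only to the mean of the loss (here zero, since $Z_y$ has vanishing projection onto the centre), while the variance is a sum over the simple ideals only; so the bookkeeping you flag as the ``main obstacle'' is not actually needed. Second, the formula is a sum over the two ideals separately, $\sum_{j\in\{+,-\}}\mathcal{P}_{\mathfrak{g}_j}(\rho)\mathcal{P}_{\mathfrak{g}_j}(Z_y)/\dim\mathfrak{g}_j$; you may merge it into a single $\mathcal{P}_{\mathfrak{g}_s}(\rho)$ term only after checking that $\mathcal{P}_{\mathfrak{g}_+}(Z_y)=\mathcal{P}_{\mathfrak{g}_-}(Z_y)$, which holds here but should be stated. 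You should also note explicitly that $Z_y\in i\mathfrak{g}$, which is the hypothesis that licenses applying the theorem at all.
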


\begin{proof}[Proof of Corollary \ref{corol:bp}]
  Being a subalgebra of $\mathfrak{u}\left(2^{n_{\mathrm{qubits}}}\right)$, $\mathfrak{g}$ is a reductive Lie algebra, meaning that  
  we have a (unique up to rearrangements of the terms) decomposition~\cite{knapp1996lie}
  \begin{equation}
   \mathfrak{g}\cong\mathfrak{g}_1\oplus\mathfrak{g}_2\oplus\ldots\oplus\mathfrak{g}_k 
    \label{eq:reductive}
  \end{equation}
   where the $\mathfrak{g}_i$ are simple Lie algebras for $i<k$, and $\mathfrak{g}_k$ is the abelian centre of $\mathfrak{g}$.
  Theorem 1 of Ref.~\cite{ragone2023unified} states that, given a DLA $\mathfrak{g}$ in the form of Equation~\ref{eq:reductive},
  an input state $\rho$ and a loss function of the form $\ell_{\boldsymbol{\theta}}(\rho, O) = \tr\left[ U_{\boldsymbol{\theta}}\rho U_{\boldsymbol{\theta}}^\dagger O\right]$,
  then if $O\in i\mathfrak{g}$ or $\rho\in i\mathfrak{g}$ we have:
  \begin{equation}
    \mathbb{E}_{\boldsymbol{\theta}}[\ell_{\boldsymbol{\theta}}(\rho, O)] = \tr\left[\rho_{\mathfrak{g}_k}O_{\mathfrak{g}_k}\right]
    \label{eq:mean}
  \end{equation}
 \noindent
  and
  \begin{equation}
    \mathrm{Var}_{\boldsymbol{\theta}}[\ell_{\boldsymbol{\theta}}(\rho, O)] = \sum_{j=1}^{k-1}\frac{\mathcal{P}_{\mathfrak{g}_j}(\rho)\mathcal{P}_{\mathfrak{g}_j}(O)}{\mathrm{dim}\ \mathfrak{g}_j}
    \label{eq:var}
  \end{equation}
  Recalling the loss function of our 
  rotation equivariant models (Equation~\ref{eq:loss})  we see that it is in the required form to apply
  this theorem, with $O=Z_y\in i\mathfrak{g}$.  
  In order to do this we must first determine the
  decomposition of $\mathfrak{g}$ into its semisimple and abelian components, to which we now turn.\\

  \noindent
  In the proof of Proposition~\ref{prop:dim} we saw that
  in our case $\mathfrak{g}$ is spanned by the $  2\cdot 4^{n_{\mathrm{rad}}}$ Pauli strings which apply 
  arbitrary Pauli operators on the first $n_{\mathrm{rad}}$ qubits and either $I$ or $Z$ on the $(n_{\mathrm{rad}}+1)$th qubit,
  along with the generators of the final $n_{\mathrm{orb}} -1$ CZ gates. 
  Pulling out the operators $I^{\otimes (n_{\mathrm{rad}}+1)}$ and $I^{\otimes n_{\mathrm{rad}}}\otimes Z$ 
  (which commute with all elements of $\mathfrak{g}$), we find that $\mathfrak{g}$ consists of an
  $n_{\mathrm{orb}}+1$ dimensional centre, 
  along with the operators in the set 
  $
   \mathrm{span}\{A\otimes I\otimes I^{\otimes (n_{\mathrm{orb}}-1)}|\ A\in \mathfrak{su}(2^{n_{\mathrm{rad}}})\} \oplus
  \mathrm{span} \{B\otimes Z\otimes I^{\otimes (n_{\mathrm{orb}}-1)}|\ B\in \mathfrak{su}(2^{n_{\mathrm{rad}}})\} 
$.
  This does not quite give us a decomposition of the form of Equation~\ref{eq:reductive}, however, as the second of those
  summands does not form a Lie subalgebra of $\mathfrak{g}$. Equivalently, however, we can consider the set
\begin{align*}
  \mathfrak{g}_{+}\oplus\mathfrak{g}_{-}:=\ &\mathrm{span}\{A\otimes \frac{(I+Z)}{\sqrt{2}}\otimes I^{\otimes (n_{\mathrm{orb}}-1)}|\ A\in \mathfrak{su}(2^{n_{\mathrm{rad}}})\}\ \\
  \oplus\ &\mathrm{span} \{B\otimes \frac{(I-Z)}{\sqrt{2}}\otimes I^{\otimes (n_{\mathrm{orb}}-1)}|\ B\in \mathfrak{su}(2^{n_{\mathrm{rad}}})\} 
\end{align*}  
 which contains the same operators and is written in the form of a direct sum of two ideals of $\mathfrak{g}$ 
  (both of which are isomorphic to $\mathfrak{su}(2^{n_{\mathrm{rad}}})$). We therefore have the decomposition
\begin{align}
  \mathfrak{g}&\cong \mathfrak{g}_+\oplus\mathfrak{g}_-\oplus\mathfrak{g}_k \\
  &\cong \mathfrak{su}(2^{n_{\mathrm{rad}}})\oplus\mathfrak{su}(2^{n_{\mathrm{rad}}})\oplus\mathbb{R}^{n_{\mathrm{orb}}+1} 
\end{align}  
  \noindent
  and are ready to apply Equations~\ref{eq:mean} and~\ref{eq:var} for a given input state $\rho=\ket{\psi(\x)}\bra{\psi(\x)}$ with true label $y\in\{1,2,\ldots,n_{\mathrm{classes}}\}$.
  Firstly, we have that $Z_{y}\in i\mathfrak{su}(2^{n_{\mathrm{rad}}})$ has zero projection onto the centre of $\mathfrak{g}$, so 
$   \mathbb{E}_{\boldsymbol{\theta}}[\ell_{\boldsymbol{\theta}}(\rho, O)]=0$.
  To calculate the variance of the loss function we introduce orthonormal (with respect to the Hilbert-Schmidt inner product)
  bases for $\mathfrak{g}_\pm$, 
\begin{equation}
  \left\{ 2^{-n_{\mathrm{qubits}}/2} P\otimes \frac{(I\pm Z)}{\sqrt{2}}\otimes I^{\otimes (n_{\mathrm{orb}}-1)} \right\}_{P\in P_{n_{\mathrm{rad}}}\setminus \{I^{\otimes n_{\mathrm{rad}}}\}}
\end{equation}
where $P_{n_{\mathrm{rad}}}$ is the set of $n_{\mathrm{rad}}$-qubit Pauli strings.
Denoting these basis operators by $\{B_j^{(\pm)}\}_{j=1}^{4^{n_\mathrm{rad}}-1}$
we have that
the purities of $Z_y$ with respect to $\mathfrak{g}_{\pm}$ are given by
\begin{align}
  \mathcal{P}_{\mathfrak{g}_\pm}(Z_y) &= \tr \left[\sum_{j=1}^{4^{n_\mathrm{rad}}-1} \tr\left(B_j^{(\pm)\dagger} Z_y\right) B_{j}^{(\pm)}\right]^2 \\
  &=  \sum_{j=1}^{4^{n_\mathrm{rad}}-1}\abs{ \tr\left(B_j^{(\pm)\dagger} Z_y\right)}^2 \\
  &= 2^{n_{\mathrm{qubits}}-1}
\end{align}
where we used that $\tr\left(B_j^{(\pm)\dagger} Z_y\right)=0$ for 
$B_j^{(\pm)}\neq I^{\otimes (y-1)}\otimes Z \otimes I^{\otimes (n_{\mathrm{rad}}-y)}\otimes (I\pm Z)/2\otimes I^{\otimes (n_{\mathrm{orb}}-1)}$. 
From Equation~\ref{eq:var}, then, we have 
\begin{align}
  \mathrm{Var}_{\boldsymbol{\theta}}[\ell_{\boldsymbol{\theta}}(\rho, Z_y)] &= \sum_{j\in \{+,-\}}\frac{\mathcal{P}_{\mathfrak{g}_j}(\rho)\mathcal{P}_{\mathfrak{g}_j}(Z_y)}{\mathrm{dim}\ \mathfrak{g}_j}\\
  &= \frac{2^{n_{\mathrm{qubits}}-1}}{4^{n_{\mathrm{rad}}-1}}\sum_{j\in \{+,-\}}\mathcal{P}_{\mathfrak{g}_j}(\rho)\\
  &= \frac{2^{n_{\mathrm{qubits}}-1}}{4^{n_{\mathrm{rad}}-1}}\mathcal{P}_{\mathfrak{g}_s}(\rho)
\end{align}
where $\mathfrak{g}_s$ is the semisimple component of the dynamical Lie algebra $\mathfrak{g}$.
This expression vanishes exponentially in $n_{\mathrm{qubits}}$ if and only if 
$4^{-n_{\mathrm{rad}}}\mathcal{P}_{\mathfrak{g}_s}(\rho)\in\mathcal{O}\left(1/b^{n_{\mathrm{qubits}}}\right)$ for some $b>2$.
\end{proof}
\section*{Appendix B} \label{appb}
In this Appendix we discuss the relevant details of 
the simulations which produced the STM images considered in this work.
The physical principle upon which STMs rely is quantum tunnelling~\cite{chen2021introduction}. 
A metallic tip is swept across the surface to be 
imaged (at a height of order $\sim$1nm), and for a given bias voltage $V$ between the tip and surface, 
the current $I$ caused by electrons tunnelling from the surface to the tip is measured, 
producing a two dimensional image of the surface (constant height mode). 
Alternatively, one can adjust the height of the tip during the sweep so as to maintain a constant tunnelling current 
and an image can be inferred from the recorded heights (constant current mode).
The tunnelling current is given by Bardeen's tunnelling formula~\cite{bardeen}, 
\begin{equation}
  I(V) = \frac{2\pi e}{\hbar} \sum_{\mu\nu} f(E_{\mu}) (1-f(E_\nu+eV))\abs{M_{\mu\nu}}^2\delta\left(E_\mu-E_\nu-eV\right) 
\end{equation}
where $E_\mu\ (E_\nu)$ are the eigenenergies of the sample (tip), $f$ the Fermi-Dirac distribution
function and $M_{\mu\nu}$ the tunnelling matrix elements,
\begin{equation}
 M_{\mu\nu}=-\frac{  \hbar ^2  }{  2m} \int_\Sigma \big[ \chi_{\nu}^* ( \boldsymbol{r})\grad \psi_{\mu} ( \boldsymbol{r}) - \psi_{\mu} ( \boldsymbol{r})\grad \chi_{\nu}^* ( \boldsymbol{r})  \big] \vdot d\boldsymbol{S}     
\end{equation}
where $\Sigma$ is a surface separating the tip from the surface being imaged. Expanding the tip wavefunction in terms 
of the spherical harmonics $Y_{lm}$ and the spherical modified Bessel functions of the second kind $k_l$ gives,
in the notation of Ref.~\cite{mandi2015chen},
\begin{equation}
  \chi _{\nu }(\boldsymbol{r}) =\sum_{\beta} C_{\nu \beta}{k}_{\beta}(\kappa_\nu(\boldsymbol{r}-\boldsymbol{r_0}))Y_{\beta}(\theta,\phi)
\end{equation}
with $\kappa_\nu$ the vacuum decay constant and $  \beta \in \{s,p_x,p_y,p_z,d_{xy}, d_{yz}, d_{z^2 - x^2 - y^2}, d_{xz}, d_{x^2-y^2} \}$.
This decomposition yields an expression for $M_{\mu\nu}$ in terms of differential operators acting on the surface wavefunction
due to Chen~\cite{chen},
\begin{equation}
  \abs{M_{\mu\nu} }^2\propto \abs{ \sum_{\beta} C_{\nu \beta} \hat{D}_{\beta}\psi_\mu(\boldsymbol{r_0}) }^2
\end{equation}
where $\hat{D}_{\beta}$ are differential operators. 
It is known that, for the Si:P system considered here, the dominant contribution comes from the  $d_{z^2 - x^2 - y^2}$ orbital~\cite{usman2016spatial,west2021influence},
with corresponding differential operator~\cite{chen}
\[ \hat{D}_{z^2 - x^2 - y^2} = \frac{2}{3} \frac{\partial^2}{\partial z^2}-  \frac{1}{3} \frac{\partial^2}{\partial x^2}- \frac{1}{3} \frac{\partial^2}{\partial y^2} \]

\noindent
The remaining challenge is to calculate
the ground state wavefunction of the P impurity, which we do via a multi-million atom $sp^3d^5s^*$ atomistic tight-binding calculation
implemented with the \texttt{NEMO-3D} system \cite{tb_hamiltonian, nemo}. The tight-binding parameters have previously been fitted to accurately reproduce 
the donor energy spectrum \cite{central_cell} and the Si band structure \cite{tb_hamiltonian}, and is capable 
of accurately predicting important physical quantities including the Stark and strain-induced hyperfine shift \cite{hyperfine} and the anisotropic electron $g$ factor in 
strained Si \cite{gfactor}.  The calculations were performed over a $(40\mathrm{nm})^3$ domain consisting of around 3.1 million atoms, 
and include the effects of a central cell corrections \cite{central_cell}, the formation of Si dimer rows on the surface 
due to the 2$\cp$1 surface reconstruction of Si \cite{21_surface}, and hydrogen passivation \cite{h_passivation}. 
The agreement that has been demonstrated between experimental STM images and images calculated using this model is remarkable~\cite{usman2016spatial}.

\def\bibsection{\subsection*{\refname}}

\bibliographystyle{naturemag}
\bibliography{./refs}
\end{document}